\theoremstyle{plain}
\newtheorem{theorem}{Theorem} [section]
\newtheorem{remark}[theorem]{Remark}
\numberwithin{equation}{section} \setcounter{footnote}{1}
\newcommand{\middlefig}{.75\textwidth}
\def\he{\rm {h}}
\def \ve{\rm{v}}
\def\RR{{\mathbb R}}
\def\R{{\mathbb R}}
\newcommand{\p}{\partial}
\theoremstyle{remark}
\begin{document}

\title[Hyperbolic NLS]{Radial standing and self-similar waves for the hyperbolic cubic NLS in 2D}
\date{}

\author[Kevrekidis]{Panayotis Kevrekidis$^1$}
\address{$^1$ Department of Mathematics\\University of Massachusetts\\710 N. Pleasant Street\\Amherst MA 01003}
\email{kevrekid@math.umass.edu}
\thanks{$^1$ The first author is funded in part by NSF-DMS-0349023,
NSF DMS-0806762, NSF-CMMI-1000337 and by the Alexander von Humboldt 
and Alexander S. Onassis Foundations. He also acknowledges numerous
discussions on this theme with Kody J.H. Law.}
\author[Nahmod]{Andrea R. Nahmod$^2$}
\address{$^2$ Radcliffe Institute for Advanced Study, Harvard University\\Byerly Hall, 8 Garden Street\\
Cambridge, MA 02138 and
Department of Mathematics\\
University of Massachusetts\\ 710 N. Pleasant Street\\ Amherst MA 01003}
\email{nahmod@math.umass.edu}
\thanks{$^2$ The second author is funded in part by NSF DMS 0803160 and a 2009-2010 Radcliffe Institute for Advanced Study Fellowship.}
\author[Zeng]{Chongchun Zeng$^3$}
\address{$^3$ School of Mathematics \\
Georgia Institute of Technology\\ Atlanta, GA 30332}
\email{zengch@math.gatech.edu}
\thanks{$^3$ The third author is funded in part by NSF-DMS 0801319.}

\begin{abstract} In this note we propose a new set of coordinates to study 
the hyperbolic or non-elliptic cubic nonlinear Schr\"odinger equation in two 
dimensions. Based on these coordinates, we study the
existence of bounded and continuous hyperbolically radial standing waves, as well as hyperbolically radial
self-similar solutions.
Many of the arguments can easily be adapted to more general nonlinearities.
\end{abstract}
\maketitle

\section{Introduction}

The hyperbolic cubic NLS equation has been studied in \cite{GHS1,GHS2,GHS3}
(see also \cite{SuSu,CoDiTri,CoTri} and references therein) and has received
attention by  both the physics and the applied mathematics communities.
This model has been used for examining
the evolution of optical pulses in normally dispersive (quasi-discrete)
optical waveguide array  structures \cite{DROU,LAH}, as well as more generally
in normally dispersive optical media \cite{trillo1,trillo2}. These
studies have, in turn,  motivated a number of  works examining the
type of coherent structures, such as the experimentally tractable X-waves
\cite{trillo1,trillo2,CoTri}, as well as more elaborate structures such
as dark-bright \cite{HAYA} or vortex-bright solitary waves \cite{EFR}.
On the one hand, the spontaneous emergence of nonlinear structures
such as the X-waves has been revealed by means of mechanisms such
as the modulational instability \cite{co1} and their nonlinear
dynamics has been considered \cite{co2}, but also more recently
studies have begun to systematically address their spatio-temporal
interactions \cite{moshonas}.

The equation arises when to the normally-dispersive (1+1)-NLS,
effects due to the diffraction
(in the quasi-discrete setting) or anomalous dispersion 
(in the continuum setting) are incorporated
as an additional term. The $(2+1)$-dimensional quasi-discrete
equation~\cite{DROU,LAH} is described as
\begin{eqnarray}
i  \frac{\partial U_n}{\partial z} -  \frac{\beta_2}{2} \frac{\partial^2 U_n }{\partial t^2} + C ( U_{n+1} + U_{n-1}) + \gamma |U_n|^2 U_n = 0
\label{eqn1}
\end{eqnarray}
 where $U$ is the complex amplitude that represents an envelope of the
relevant wave field,  $z$ is the  longitudinal direction along which
propagation occurs;  and $n$ and $t$ are the two transverse axes (which
represent space and time respectively). In that same vein, it is relevant
to mention that very recently also $(3+1)$-dimensional generalizations
of the above quasi-discrete setting have emerged, giving rise to
the so-called discrete-continuous X-waves in photonic
lattices~\cite{heinrich}.

The continuous version of the $(2+1)$-dimensional equation can be
written as
\begin{equation*}
i u_t + u_{xx} - u_{yy} + \gamma |u|^2 u = 0
\qquad \qquad \mbox{ (3HNLS)}
\end{equation*}
and  is related to both the Davey-Stewartson and
Ishimori systems  \cite{GHS2,SuSu}.   In fact, in \cite{GHS1}
Ghidaglia and Saut showed that (3HNLS) satisfies the same linear Strichartz
estimates as its elliptic counterpart and relying on them,  they
proved the hyperbolic-elliptic Davey-Stewartson is locally well posed
in $L^2(\RR^2)$ with time of existence depending on the profile of the
initial data and globally well posed for sufficiently small data.  The same
argument proves that (3HNLS) enjoys the same well posedness results.   In  \cite{GHS3},
Ghidaglia and Saut showed that there are no nontrivial localized standing wave solutions
to (3HNLS); more precisely they proved that there are no solutions of the
form $e^{i \mu t} \phi(x)$ with $\phi \in H^1\cap H^2_{loc}$ and
any $\mu \in \RR$ (regardless of dimension).
From the physical point of view (3HNLS) or  equation (\ref{3HNLS}) below represents the
canonical model for the study (experimentally and also numerically)
of X-waves, as considered in \cite{trillo1,trillo2}; see also the
earlier and more recent reviews of \cite{CoDiTri,CoTri}. It should
also be mentioned that the linear counterpart of the equation
has also been considered extensively~\cite{CoTri,Efre2}
and has given rise to interesting
recent experimentally examined structures such as the linear alternative
to light bullets proposed in~\cite{dnc0} in the form of
Airy-Bessel wavepackets.

The major differences between the regular Laplacian $\Delta = \p_{xx} + \p_{yy}$ and
the hyperbolic Laplacian $\square_{xy} = \p_{xx} - \p_{yy}$ include that a)
the latter has
nontrivial characteristics; b) the former is
invariant under the Euclidean rotations on $\R^2$ while the latter is invariant
under the hyperbolic
rotations; and c) the former corresponds to the positive Dirichlet energy and
the latter to
the energy $E(u):= \int_{\RR^2}  |u_x|^2 - |u_y|^2 dx dy $ which is indefinite. Due to observation a),
this PDE system may have singularities forming along the characteristics and thus we first
find some explicit compatibility conditions along the discontinuities of weak solutions which are
piecewise smooth. Observation b)
naturally motivates us to consider the hyperbolic coordinates $(a, \alpha)$, given by
$x=\pm a \cosh \alpha$ and $y=\pm a \sinh \alpha$. Since the hyperbolic NLS is invariant
under the hyperbolic rotations $(a, \alpha) \to (a, \alpha + \alpha_0)$, it is natural
to start our study with solutions with the corresponding symmetry -- continuous and piecewise
smooth hyperbolically radial (or equivariant) standing waves and hyperbolically radial (or equivariant)
self-similar waves\footnote{By self-similar waves we mean standing waves in the self-similar variables; c.f. \eqref{self-similar} \eqref{SSstanding}.} -- as our initial attempts to further our understanding of
this PDE.
They also usually play important roles in studying the dynamics of (3HNLS). In particular, states
which are hyperbolically radial share some similarity with the profiles of the
two-dimensional analogs of the so-called $X$-waves
observed in numerics; as a relevant example see e.g.
Figs. 18.6-18.7 of~\cite{CoDiTri} (cf. also~\cite{CoTri}).
Since the measure $dxdy = |a| da d\alpha$, non-zero
hyperbolic radial states do not belong to $L^2 (\R^2)$. We do not view this as an extremely undesirable
situation. On the one hand, some important elementary waves, like plane waves
and kinks, do have infinite $L^2$ norm.
On the other hand, since the hyperbolic Laplacian $\square_{xy}$ does not correspond to a positive energy
and has characteristics, it is reasonable, at least as an initial step, to relax the finite energy
requirement and look for hyperbolically radial standing waves which may have asymptotic values other
than zero at spatial infinity in each hyperbolic quadrant.

In constructing these hyperbolically radial (or equivariant) waves, we encounter an ODE
with a regular singular
point. Though for this specific equation there are many standard results concerning the existence of
local solutions, we provide a different proof based on the invariant manifold theory in dynamical systems
which is a generalization of a trick used in \cite{DTZ}. Unlike the power series method or variational
method, this approach requires minimal smoothness and no extra structure.
Our aim in the present paper is to use such techniques and,
especially, the hyperbolic coordinate system in order to construct, for
the first time to our knowledge,
nontrivial standing wave solutions to the (3HNLS) equation.
In addition, we also consider radial self-similar solutions in the
same coordinate system.
The introduction of the hyperbolic coordinates and the
analysis on hyperbolically
radial waves represent the first step in a longer term program to further our
understanding of PDEs involving the hyperbolic Laplacian operator.

\smallskip

  In the rest of the paper radial always means hyperbolically radial.

\section{Weak solutions and Compatibility Conditions}

In the following, we give the compatibility conditions satisfied by weak solutions following the standard argument. Let us denote by  $\square_{xy}:= \partial_x^2 - \partial_y^2$ and for $(x, y) \in \mathbb R^2$  suppose $u(t, x, y)$ is a solution of (3HNLS) with $\gamma=1$; ie.  $u$ satisfies:
\begin{equation} \label{3HNLS}
i \partial_t u + \square_{xy} u + |u|^2 u = 0, \end{equation}
in the distribution sense and is smooth except on a surface $\Gamma \subset \RR^{1+2}$, let us denote by $\Omega_\pm$ the two domains separated by $\Gamma$ and by $N = (N_0, N_1, N_2)$ the unit outward normal vector of $\Omega_+$. Let $[f] = f_+ - f_-$ where $f_\pm$ is the one side limit of any quantity $f$ on $\Gamma$. For any test function $\phi \in C_0^\infty (\RR^{1+2})$ one may compute that,
\begin{equation} \label{C1}
\int_\Gamma ((0, \nabla \phi) \cdot R N) [u] - \phi R N \cdot (i[u], [\nabla u]) dS =0
\end{equation}
where $\nabla$ denotes the spatial gradient and the matrix $R:={\rm diag}(1, 1, -1)$ performs a reflection across the
$xt$-plane. By taking $\phi$ identically zero on $\Gamma$, we obtain that, if $u$ has a jump discontinuity along $\Gamma$, then
\begin{enumerate}
\item [(C1)] $N_1^2= N_2^2$ or equivalently $\Gamma$ is parallel to $x=y$ or $x=-y$ which are characteristics of $\square_{xy}$.
\end{enumerate}
Therefore, the surface takes the form of
\[
\Gamma = \{ (t, x_0(t) + \alpha_1 s, y_0(t) + \alpha_2 s \mid t,\, s \in \RR\} \text{ and } N_0 = N_0(t)
\]
where $\alpha_{1, 2} = \pm 1$. For such smooth $\Gamma$, integrating by parts on the first integrand leads to
\begin{equation} \label{C2}
\int_\Gamma  \phi R N \cdot (i[u], 2[\nabla u]) dS =0
\end{equation}
and thus we have
\begin{enumerate}
\item [(C2)] $\frac i2 [u] N_0 + [u_x] N_1 - [u_y] N_2=0$.
\end{enumerate}
Condition (C2) is equivalent to that $[u_x]$, $[u_y]$, $i[u]$ are colinear in the complex plane. In fact, if $u$ is piecewise $C^1$ and $[u]=0$, then, on the one hand, (C2) implies that the vector $(N_1, -N_2)$ lies in the kernel of the $2\times 2$ matrix $[\nabla u]$. On the other hand, if $[\nabla u]\ne0$, then its kernel is given exactly by the tangent space of $\Gamma$, i.e. the one dimensional subspace spanned by $(N_2, -N_1)$, and thus $N_1^2 =N_2^2$ holds. This means that, even though (C1) follows from the assumption $[u]\ne 0$, it still follows from (C2) if $[u]=0$ and $[\nabla u]\ne0$. Therefore, (C1) and (C2) should be imposed on weak solutions as long as $u$ or $\nabla u$ have a jump discontinuity along $\Gamma$. Condition (C2) can also be viewed as a necessary condition for the singularity to propagate as a surface in $\RR^{1+2}$. If (C2) is not satisfied by some given initial data, then the singularity will not maintain a simple geometry for $t>0$.

Notice that, according to (C1), the jump discontinuity can occur along lines in the directions of $(1, 1)$ or $(1, -1)$. To study the local behavior of the weak solutions at the intersection of its lines discontinuity, we consider the case where
\[
\Gamma_* = \{(x-x_0(t))^2= (y - y_0(t))^2\},
\]
i.e. a cross in the $xy$-plane with a moving center $c(t) = (x_0(t), y_0(t))$. We will use $u_{\he, \ve}^\pm$ to denote the value of $u$ in the cone either in the horizontal or vertical directions, opening in the positive or negative directions. In particular, $u_{\he, \ve}^{1, 2, 3, 4}$ denote the limiting values of $u_{\he, \ve}^\pm$ on $\Gamma$ in each quadrant. We use $N$ to denote the unit outward normal of the horizontal cones $(x- x_0(t))^2 > (y- y_0(t))^2$. In this case, instead of \eqref{C2}, integrating by parts on the first integrand in \eqref{C1} implies
\begin{align}
4 \int \phi(c(t)) \{u_{\he}^1(c(t))- u_{\ve}^1(c(t)) &+ u_{\he}^2(c(t)) - u_{\ve}^2(c(t)) + u_{\he}^3(c(t)) - u_{\ve}^3 (c(t)) \notag \\
&+ u_{\he}^4(c(t)) - u_{\ve}^4(c(t))\} dt - \int_\Gamma  \phi R N \cdot (i[u], 2[\nabla u]) dS =0. \label{C3}
\end{align}
Therefore, in addition to (C2), we also obtain
\begin{enumerate}
\item [(C3)] $u_{\he}^1- u_{\ve}^1 + u_{\he}^2 - u_{\ve}^2 + u_{\he}^3 - u_{\ve}^3 + u_{\he}^4 - u_{\ve}^4 =0$ along $c(t)$, $t \in \RR$.
\end{enumerate}

\begin{remark}  When a solution has singularity along a surface, the compatibility conditions (C1--C3) above are only {\it necessary}. In fact, one should be careful when talking about the propagation of the singularity as the system does not have finite propagation speed, which can be seen from the formula of the fundamental solution if the free equation \cite{GHS1}. For example, if the initial data has some simple jump discontinuity, this singularity might become unpredictable for $t>0$ except for highly symmetric cases. In what follows $\Gamma$ will describe the cross $X$ in $\RR^2$  defined by  $\{\,|x|= |y|\, \}$.   We will see in Section 4 that the simplest example of data satisfying the compatibility conditions described above are
those of the form $(f, g, -f, -g)$ given in each of the four
cones of $\RR^2 \setminus X$.
\end{remark}

\section{Hyperbolic coordinates and radial solutions}

Notice that the regular Euclidean rotation does not keep $\square_{xy}$ invariant, and hence the usual polar coordinates may not be
the most appropriate ones for this
system. Instead, we introduce an alternative: we foliate $\Bbb R^2\setminus X$ by hyperbolas and consider the following change of variables:

\begin{align}  \begin{cases} \label{coordinates}  x &:= a \cosh \alpha \qquad y:= a \sinh \alpha \quad \quad a^2 := x^2 - y^2 \quad \quad \mbox{when} \quad |x| > |y| \\
  x &:= a \sinh \alpha \qquad y:= a \cosh \alpha \quad \quad a^2 := y^2 - x^2 \quad \quad  \mbox{when} \quad |x| < |y| \end{cases} \end{align}
where $a, \alpha \in \Bbb R,  a\ne 0$.

Under this change of variables we have that the area form $dx dy = |a| da d\alpha$. Moreover  \text{ for any }\,  $a \in \Bbb R\setminus\{0\}$ we have:

\begin{align*} \quad &\nabla a = ( \, \frac{x}{a}, - \frac{y}{a} \,) \qquad \mbox{and} \qquad \square a = \frac{1}{a}  \qquad \quad  \, \, \,  \text{ for}  \quad |x| > |y|, \\
& \nabla a = ( -\frac{x}{a}, \,  \frac{y}{a}\, )    \qquad \mbox{and} \qquad \square a =  - \frac{1}{a} \qquad \quad \text{ for }  \quad |x| < |y|.
\end{align*}

For $\nabla \alpha$ we need to distinguish whether $a>0$ or $a <0$. We have:

\begin{align*} \quad & \nabla \alpha = ( - \frac{y}{a^2}, \, \frac{x}{a^2} \,)  \text{ for  }  \quad |x| > |y| \text{ and }  \, a> 0 \\
\quad & \nabla \alpha = ( \, \frac{y}{a^2}, - \frac{x}{a^2} \,)  \text{ for  }  \quad |x| > |y| \text{ and }  \, a< 0 \\
\quad & \nabla \alpha = (\,  \frac{y}{a^2}, - \frac{x}{a^2} \,)  \text{ for  }  \quad |x| < |y| \text{ and }  \, a> 0 \\
\quad & \nabla \alpha = ( - \frac{y}{a^2}, \, \frac{x}{a^2} \,)  \text{ for  }  \quad |x| < |y| \text{ and }  \, a < 0 \\
\quad &  \square \alpha = 0 \qquad \text{ in all cases. }\end{align*}

We moreover note that
\begin{align*} \quad & a_x^2 - a_y^2 = 1  \qquad \quad \mbox{and} \qquad   \alpha_x^2 - \alpha_y^2 =  - \frac{1}{a^2}  \qquad \text{ for  }  \quad |x| > |y|, \\
 \quad & a_x^2 - a_y^2 =  - 1   \qquad \mbox{and} \qquad     \alpha_x^2 - \alpha_y^2 =   \, \frac{1}{a^2}   \qquad \text{ for  }  \quad |x| < |y|.
\end{align*}

 \medskip

By changing coordinates $(x,y) \longrightarrow (a, \alpha)$ in ${\mathbb R}^2 \setminus X$ the (3HNLS) equation transforms itself into the system {\footnote{ Note that in the first quadrant $a>0$ and $\alpha >0$; in the second quadrant $a<0$ and $\alpha <0$; in the third $a<0$ and $\alpha >0$ and in the fourth quadrant $a>0$ and $\alpha<0$} }

\begin{align}  \begin{cases} \label{system}  i \partial_t u + \partial_{aa} u + \frac{1}{a} \partial_a u - \frac{1}{a^2} \partial_{\alpha \alpha} u + |u|^2 u = 0 \qquad \text{ for } \,  |x| > |y| \qquad \rm{(F)} \\
 i \partial_t u - \partial_{aa} u - \frac{1}{a} \partial_a u + \frac{1}{a^2} \partial_{\alpha \alpha} u + |u|^2 u = 0 \qquad \text{ for } \,  |x| < |y| \qquad \rm{(D)} \end{cases} \end{align}

%
%

It is clear that \eqref{system} is invariant under the translation in $\alpha$, or equivalently (3HNLS) is invariant under the rotation in the hyperbolic coordinates. Thus it motivates us to study radial or equivariant solutions. We will focus on radial solutions $u(t, a, \alpha)= u(t, a)$. (See Remark \ref{equivariant} for equivariant solutions.)
%
%
In the rest of this note, we will consider radially symmetric solutions assuming $u_{\he, \ve}^\pm$ is {\it a priori} smooth for $a \in [0, \infty)$ except for some jump discontinuity across $X$. Note in what follows these facts will be used without any further mention. In other words we will be working with

\begin{align}  \begin{cases} \label{system2}  i \partial_t u + \partial_{aa} u + \frac{1}{a} \partial_a u + |u|^2 u = 0 \qquad \text{ for } \,  |x| > |y| \qquad \rm{(F)} \\
 i \partial_t u - \partial_{aa} u - \frac{1}{a} \partial_a u + |u|^2 u = 0 \qquad \text{ for } \,  |x| < |y| \qquad \rm{(D)} \end{cases} \end{align}

The compatibility condition (C3) implies that at the origin $(0,0) \in \Bbb R^2$
\begin{equation} \label{comp}
u_{\he}^+ (0) + u_{\he}^-(0) = u_{\ve}^+(0) + u_{\ve}^-(0).
\end{equation}
Some obvious solutions of this type are $(u_{\he}^-=-u_{\he}^+, u_{\ve}^+=u_{\ve}^-= 0)$ or $(u_{\he}^+=u_{\he}^-=0, u_{\ve}^-=-u_{\ve}^+)$ where $u_{\he}^+$ or $u_{\ve}^+$ solves the above (F) or (D) respectively.

\bigskip


We are interested in the existence (construction) of radial standing wave solutions $u(t, a) = e^{-i \mu t} v(a)$ such that as $a \to  \pm \infty$ $v \to v_{\pm \infty}$ with $v_{\pm \infty}$ a constant.

\begin{remark} \label{va=0}
If $u(x, y)$ is $C^1$ and (hyperbolically) radially symmetric, then the radial symmetry implies $D u(0, 0)=0$ and thus $v_a(0)=0$.
{\it We assume the latter from now on.}
\end{remark}

From \eqref{system2} we then have that $v$ satisfies:

\begin{align} \begin{cases} \label{standing} \quad  \mu v + v_{aa} +  \frac{1}{a} v_a +  |v|^2 v =0 \quad  \qquad \text{ for } \, |x| > |y| \quad \qquad  \rm{(F)} \\
\quad \mu v - v_{aa} -  \frac{1}{a} v_a +  |v|^2 v =0 \quad  \qquad \text{ for } \, |x| < |y| \quad \qquad  \rm{(D)} \end{cases}\end{align}

\smallskip

\begin{remark}  We call $v_d$ the solution to (D) and $v_f$ the solution to (F) and $v=(v_d, v_f)$ the solution to \eqref{standing}.
Note that a radial solution to \eqref{standing} is $C^1$ if and only if $v_d$ and $v_f$ have equal values at $a=0$;  ie.  $v_d(0)=v_f(0)$.
\end{remark}

\smallskip

\begin{remark} \label{real}
Under the assumption $v_a(0)=0$, without any loss of generality we consider $v$ real valued only. In fact, suppose $v(0)= \omega_0 e^{i\theta_0}$ with 
$omega_0 \in  \mathbb R$ and
$\theta_0 \in \mathbb R$. Let $\tilde v(a) = v(a) e^{-i \theta_0}$, then $\tilde v(0), \, \tilde v_a(0) \in \R$. Theorem \ref{singularODE} implies the solutions $\tilde v(a)$ exist and are real.
\end{remark}

\smallskip

\begin{remark} In the analysis below we only consider $a >0$ but corresponding results hold for $a<0$.
\end{remark}

\smallskip

\begin{remark}
By considering another coordinate change $x' = a \cos \alpha$ and $y' = a \sin \alpha$, \eqref{standing} (F)(D) simply  become $\pm \Delta_{x'y'} v +\mu v + v^3=0$. Solutions of this equation in $L^2$ have been studied extensively (see, for example, \cite{BL1, BL2, BLP, Pohozaev, Strauss}). Based on those results, it is obviously impossible to find solutions of (F) and (D) with $L^2$ decay at infinity so that they match at $a=0$. Instead, we consider all possible bounded solutions here through a more careful analysis.
\end{remark}

\bigskip

\begin{theorem}[Main Theorem 1] \label{maintheorem} $C^1$ weak radial standing wave solutions $u(t, a, \alpha) =
e^{-i\mu t} v(a)$ to \eqref{3HNLS}, which are $C^2$ except along $x^2= y^2$ (or equivalently $a=0$), exist only if $\mu <0$. In fact, for any $v_0 \in (0, \sqrt{-\mu})$, there exists a unique $C^1$ such solution $(u_f, u_d)=(e^{-i\mu t} v_f(a), e^{-i\mu t}v_d(a))$ such that $v_f$ and $v_d$ are smooth on $a\in [0, \infty)$ and
\[
v_d(0)= v_f(0) =v_0 \text{ and } \p_a v_d(0)=0 = \p_a v_f(0).
\]
Moreover, for $a\to \infty$,
\[\begin{split}
&v_f(a) = \sqrt{-\mu} + \frac {r_f}{\sqrt{a}} \cos (\gamma a + \sigma \log a + \beta) + O(a^{-\frac 32}) \\
&v_d(a) = \frac {r_d}{\sqrt{a}} \cos (\gamma a + \sigma \log a + \beta) + O(a^{-\frac 32})
\end{split}\]
for some $\gamma, \beta, \sigma \in \mathbb R$ and $r_f \ne0$ and $r_d\ne 0$.
\end{theorem}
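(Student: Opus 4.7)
The argument splits into four stages. By Remark~\ref{va=0} a $C^1$ radial standing wave must satisfy $v_a(0)=0$, so Theorem~\ref{singularODE} applied to each of the real ODEs
\begin{align*}
(\mathrm{F})&:\ v_{aa}+\tfrac 1a v_a+\mu v+v^3=0,\\
(\mathrm{D})&:\ v_{aa}+\tfrac 1a v_a-\mu v-v^3=0
\end{align*}
(the latter coming from \eqref{standing}(D) after a sign change) with Cauchy data $(v(0),v_a(0))=(v_0,0)$ produces unique smooth local solutions $v_f,v_d$ on some $[0,a_0]$ whose $C^1$ matching at $a=0$ is automatic.

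To prove that $\mu<0$ is necessary, assume $\mu\geq 0$ and $v_0>0$ and work with $(\mathrm{D})$, recast as $(av_a)'=a(\mu v+v^3)$. The right hand side is $\geq av_0^3$ as long as $v\geq v_0$, yielding $v_a\geq v_0^3 a/2$ and $v(a)\geq v_0+v_0^3 a^2/4$ on the existence interval. A doubling-time bootstrap---the time required for $v$ to pass from $2^k v_0$ to $2^{k+1}v_0$ shrinks geometrically---produces finite-time blowup in time $\leq 4/v_0$, contradicting the global smoothness required by the theorem. Hence $\mu<0$.

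For global existence when $\mu<0$ I use the monotone energies
\begin{equation*}
E_f=\tfrac12 v_a^2+\tfrac\mu2 v^2+\tfrac14 v^4,\qquad E_d=\tfrac12 v_a^2-\tfrac\mu2 v^2-\tfrac14 v^4,
\end{equation*}
both satisfying $E'=-\tfrac 1a v_a^2\leq 0$. For $v_0\in(0,\sqrt{-\mu})$ the initial point $(v_0,0)$ lies on a bounded component of the corresponding sublevel set: for $(\mathrm{F})$, the component encircling the minimum $v=\sqrt{-\mu}$ of the double-well $V_f$; for $(\mathrm{D})$, the component around the local minimum $v=0$ of $V_d$, strictly inside the separatrices through the saddles at $\pm\sqrt{-\mu}$. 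Energy non-increase confines each trajectory for all $a>0$, giving global bounded smooth solutions. The resulting bound $\int_0^\infty\tfrac 1a v_a^2\,da<\infty$ together with a standard $\omega$-limit argument (boundedness of $v,v_a,v_{aa}$) forces $v_f(a)\to\sqrt{-\mu}$ and $v_d(a)\to 0$ as $a\to\infty$.

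For the asymptotic expansion, setting $w_f=v_f-\sqrt{-\mu}$ gives
\begin{equation*}
w_{aa}+\tfrac 1a w_a+(-2\mu)w+3\sqrt{-\mu}\,w^2+w^3=0,
\end{equation*}
whose linearization is Bessel's equation of order zero with frequency $\gamma_f=\sqrt{-2\mu}$ and fundamental solutions behaving like $\sqrt{\tfrac{2}{\pi\gamma_f a}}\cos(\gamma_f a+\text{const})+O(a^{-3/2})$. A Pr\"ufer ansatz $w_f=a^{-1/2}R(a)\cos(\gamma_f a+\phi(a))$ combined with two-scale averaging extracts the slow (DC) part of $3\sqrt{-\mu}w_f^2$, which produces a secular $1/a$ contribution to $\dot\phi$ and hence, after integration, the logarithmic phase $\sigma_f\log a+\beta_f$, while $R\to r_f$; the $O(a^{-3/2})$ remainder is closed by a contraction in a weighted space. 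The same scheme applied to $(\mathrm{D})$ linearized at $v=0$ yields the analogous expansion with $\gamma_d=\sqrt{-\mu}$ and its own constants $(\sigma_d,\beta_d,r_d)$. The main obstacle is this final stage: the $\sigma\log a$ phase correction is invisible to the linear Bessel asymptotic and has to be extracted by averaging, and the non-vanishing of $r_f,r_d$ requires an additional transversality argument ruling out that $v_f,v_d$ settle onto the fast-decaying branch of the linearized problem, which I would verify by tracking the sign/shape of the initial data relative to the Bessel fundamental system.
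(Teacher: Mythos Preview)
Your overall architecture---local existence from Theorem~\ref{singularODE}, phase-plane energy confinement for global bounds when $\mu<0$, and Pr\"ufer/polar variables for the large-$a$ asymptotics---matches the paper's. For $\mu\ge0$ the paper argues differently (it multiplies (D) by $av$ and integrates to obtain $(v^2)_a=\tfrac 2a\int_0^a \tilde a(v_a^2+\mu v^2+v^4)\,d\tilde a$, hence $v^2\gtrsim\log a$ and no nontrivial bounded $v_d$), but your blowup route is also legitimate; just note that the doubling estimate as written is loose---a cleaner closure is to use the decreasing energy $\tfrac12 v_a^2-\tfrac\mu2 v^2-\tfrac14 v^4\le E(0)<0$ to get $v_a\ge\tfrac12 v^2$ once $v$ is large.

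The genuine gap is at the end, the nonvanishing of $r_f,r_d$. Your framing---``ruling out that $v_f,v_d$ settle onto the fast-decaying branch of the linearized problem''---misidentifies the obstruction: the linearizations at $v=\sqrt{-\mu}$ (for F) and $v=0$ (for D) are Bessel equations of order zero, and \emph{both} fundamental solutions behave like $a^{-1/2}$; there is no fast-decaying branch to be transverse to, so tracking initial data against $J_0,Y_0$ cannot settle the question. The real danger is that the nonlinear terms drain the Pr\"ufer amplitude $R(a)$ to zero over the infinite interval, and this must be excluded quantitatively. The paper does this by passing to $w=\sqrt{a}\,(v-\text{limit})$ and its energy $E(a)=\tfrac12 w_a^2+\tfrac12\gamma^2 w^2+(\text{lower order})$. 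First a bootstrap---using the a~priori confinement $v^2\le v_0^2<(-\mu)$ for (D), resp.\ $|v^2+\mu|\le|v_0^2+\mu|<(-\mu)$ for (F), one gets $\partial_a E\le \tfrac ca E$ with an explicit $c<1$, hence $E=O(a^c)$, and feeding this back yields $E=O(1)$. The decisive additional step is then a \emph{lower} Gronwall bound $\partial_a E\ge -Ca^{-2}E$, which gives $E(a)\ge E(a_0)\exp\bigl(-\int_{a_0}^\infty Ca^{-2}\,da\bigr)>0$, so $E\to E_0>0$ and $r\ne0$. This two-sided energy control on $w$ is the missing idea in your plan; it also makes the separate $\omega$-limit step superfluous, since $w=O(1)$ already delivers $v\to\text{limit}$ at rate $a^{-1/2}$.
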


For $v_0 \in (-\sqrt{-\mu}, 0)$, we only need to consider $-v$ and apply the above theorem.

\medskip

Notice that equation (3HNLS) is invariant under the scaling transformation $u \, \to \, u_\lambda (t, x, y) = \lambda u(\lambda^2 t, \lambda x, \lambda y)$. Accordingly, in the hyperbolic coordinates, consider \eqref{system} in the usual similarity coordinates $u(t, a, \alpha) = t^{-\frac 12} \tilde u(\log |t|, t^{-\frac 12}a, \alpha)$, we obtain
\begin{align}  \begin{cases} \label{self-similar}
i \partial_t \tilde u - \frac i2 \tilde u -\frac i2 a \p_a \tilde u + \partial_{aa} \tilde u + \frac{1}{a} \partial_a \tilde u - \frac{1}{a^2} \partial_{\alpha \alpha} \tilde u + |\tilde u|^2 \tilde u = 0 \qquad \text{ for } \,  |x| > |y| \qquad \rm{(F)} \\
i \partial_t \tilde u - \frac i2 \tilde u -\frac i2 a \p_a \tilde u - \partial_{aa} \tilde u - \frac{1}{a} \partial_a \tilde u + \frac{1}{a^2} \partial_{\alpha \alpha} \tilde u + |\tilde u|^2 \tilde u = 0 \qquad \text{ for } \,  |x| < |y| \qquad \rm{(D)}
\end{cases} \end{align} where now $\tilde u = \tilde u( t, a, \alpha)$.
Next we look for  radial standing waves in the self-similar variables of the form $\tilde u = e^{-i \mu t} v(a)$ which satisfies
\begin{align} \begin{cases} \label{SSstanding}
\quad  \mu v - \frac i2 v -\frac i2 a  v_a + v_{aa} +  \frac{1}{a} v_a +  |v|^2 v =0  \qquad \text{ for } \, |x| > |y|  \qquad  \rm{(F)} \\
\quad \mu v - \frac i2 v -\frac i2 a  v_a - v_{aa} -  \frac{1}{a} v_a +  |v|^2 v =0  \qquad \text{ for } \, |x| < |y|  \qquad  \rm{(D)} \end{cases}\end{align}
The remarks we made for the radial standing waves are valid for radial self-similar waves except here we can no longer assume $v$ has only real values.

\bigskip

\begin{theorem}[Main Theorem 2] \label{maintheorem2} For any $\mu \in \R$, there exists $v_*>0$ such that for any $v_0 \in (-v_*, v_*)$, there exists a unique $C^1$ weak  radial self-similar wave solution $(u_f, u_d)$ of \eqref{3HNLS}
which is $C^2$ except at $x^2 = y^2$ (or equivalently $a=0$) and
\[
u_f(t, 0)=u_d(t, 0)=v_0 t^{-\frac 12} e^{-i\mu \log |t|} \text{ and } \partial_a u(t, 0)=0
\]
As $\frac a{\sqrt{t}} \to \infty$, there exist constants $r_f\ne 0$, $r_d\ne 0$, $\theta_f$, and $\theta_d$ such that
\[\begin{split}
&u_f(t, a) = \frac {r_f}a e^{i(\frac {a^2}{8t} -\mu \log |t|)} \cos (\theta_f - \frac {a^2}{8t} - 2\mu \log \frac a{\sqrt{t}})  + O(\frac {t^{\frac 32}}{a^3})\\
&u_d(t, a) = \frac {r_d}a e^{i(-\frac {a^2}{8t} -\mu \log |t|)} \cos (\theta_d - \frac {a^2}{8t} + 2\mu \log \frac a{\sqrt{t}})  + O(\frac {t^{\frac 32}}{a^3}).
\end{split}\]
\end{theorem}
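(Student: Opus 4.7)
The plan is to mirror the strategy used for Main Theorem \ref{maintheorem}, with adaptations specific to the self-similar case: the solution $v$ is now genuinely complex-valued (the terms $-\tfrac{i}{2}v$ and $-\tfrac{i}{2}a v_a$ in \eqref{SSstanding} destroy the phase-rotation invariance exploited in Remark \ref{real}); moreover the linear equation at infinity carries the similarity drift $-\tfrac{i}{2}a v_a$, which produces quadratic-phase WKB oscillations $e^{\pm i a^2/4}$ in place of the constant-frequency oscillations $e^{i\gamma a}$ of Theorem \ref{maintheorem}. I begin by invoking the invariant-manifold construction of Theorem \ref{singularODE} for each of the complex scalar equations \eqref{SSstanding}(F) and (D). The point $a = 0$ is a regular singular point of the same Bessel type as in the standing wave case -- the new lower-order terms are analytic at $a = 0$ and preserve the indicial equation $\rho^2 = 0$, whose double root automatically enforces $v_a(0) = 0$ on the smooth branch -- and since the construction only uses local smoothness, it extends to complex coefficients verbatim. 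This gives, for each $v_0 \in \RR$, local smooth solutions $v_f, v_d$ on an interval $[0, \delta)$ with $v_{f,d}(0) = v_0$ and $\partial_a v_{f,d}(0) = 0$.

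Next, I would extend each solution globally to $a \in [0, \infty)$ and prevent blow-up for small $|v_0|$. Writing each of (F) and (D) as a first-order complex system in $(v, v_a)$, a continuous-induction argument in a weighted norm such as $\|v\|_w = \sup_{a \geq 0} (1+a)|v(a)|$ exploits the fact that the cubic nonlinearity is a genuine higher-order perturbation once $|v| = O(1/a)$: the nonlinear response is controlled by $\|v\|_w^3$, while the linear self-similar model admits solutions of size $O(1/a)$ at infinity. Choosing $v_*$ small enough that the bootstrap closes yields a global smooth solution on $[0, \infty)$.

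Finally, I would extract the asymptotic behavior via variation of parameters on the linearized equation
\[
v_{aa} + \tfrac{1}{a}v_a - \tfrac{ia}{2}v_a + \bigl(\mu - \tfrac{i}{2}\bigr) v = 0
\]
for (F), with $v_{aa} + v_a/a$ replaced by $-v_{aa} - v_a/a$ for (D). A WKB analysis at $a = \infty$ produces two fundamental solutions with leading asymptotics
\[
\phi_1(a) \sim \frac{1}{a} e^{-2i\mu \log a}, \qquad \phi_2(a) \sim \frac{1}{a} e^{\pm i a^2/4 + 2i\mu \log a},
\]
where $+$ corresponds to (F) and $-$ to (D). Writing $v(a) = c_1(a)\phi_1(a) + c_2(a)\phi_2(a)$, the variation-of-parameters system for $(c_1, c_2)$ has forcing of size $O(a|v|^3) = O(a^{-2})$, which is integrable at $\infty$, so $c_{1,2}(a)$ converge to limits $c_{1,2}^\infty$. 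The cosine form in the theorem is then simply the complex combination $\tfrac{r_{f,d}}{2}\bigl(e^{i\theta_{f,d}}\phi_1 + e^{-i\theta_{f,d}}\phi_2\bigr)$; undoing the similarity change $u(t, a) = t^{-1/2} e^{-i\mu \log|t|} v(a/\sqrt{t})$ after factoring out the common phase $e^{\pm i a^2/(8t)}$ produces the stated asymptotic in the original $(t, a)$ variables.

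The main technical difficulty is to establish the nondegeneracy $c_1^\infty, c_2^\infty \neq 0$ (equivalently $r_{f,d} \neq 0$) for every nonzero $v_0 \in (-v_*, v_*)$, together with the sharp remainder $O(t^{3/2}/a^3)$. Nondegeneracy plausibly follows by a continuity-in-$v_0$ argument: the map $v_0 \mapsto (c_1^\infty, c_2^\infty)$ is smooth and linearizes at $v_0 = 0$ to an isomorphism computable from the Wronskian of $\phi_1, \phi_2$, so neither component vanishes for small nonzero $v_0$. The sharp remainder requires carrying one further order in the WKB expansion and working in a weighted Banach space with weight $a^3$ to capture the next correction.
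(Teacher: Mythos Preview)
Your overall plan is workable but takes a harder road than the paper. You assert that $v$ is ``genuinely complex-valued'' and that the reduction of Remark~\ref{real} is unavailable, and you therefore proceed by WKB and variation of parameters on the full complex system. The paper's key observation is that this complexity is removable by an explicit gauge: passing to the independent variable $s = a^2/2$ and setting $w = s^{1/2} e^{\mp i s/4} v$, equation \eqref{SSstanding} becomes
\[
w_{ss} + \Bigl(\tfrac{1}{16} + \tfrac{1}{4s^2} \pm \tfrac{\mu}{2s}\Bigr) w \pm \tfrac{1}{2s^2} |w|^2 w = 0,
\]
which has purely real coefficients. For real $v_0$ the local solution coming from Theorem~\ref{singularODE} then gives real $w$, and one is back in exactly the framework of Section~4. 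The paper introduces the energy $E(s) = \tfrac12 w_s^2 + (\tfrac{1}{32} + \tfrac{1}{8s^2} \pm \tfrac{\mu}{4s})w^2 \pm \tfrac{1}{8s^2}w^4$ and obtains a two-sided bound $|\partial_s E| \le C s^{-2} E$ for large $s$ --- directly in the focusing case, and in the defocusing case under a bootstrap hypothesis $|w|\le 1$, which is precisely where the smallness of $v_*$ enters. This gives global boundedness of $(w, w_s)$, convergence $E(s)\to E_0>0$, and then the oscillatory asymptotic via the polar substitution $w = r\cos\theta$, $w_s = \tfrac14 r\sin\theta$ as in the standing-wave proof.

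What this buys over your route: the nondegeneracy $r_{f,d}\ne 0$ that you flag as the ``main technical difficulty'' is immediate, since $E_0>0$ follows from the differential inequality whenever the solution is nontrivial; no linearization-in-$v_0$ argument is needed. The quadratic phase $e^{\pm i a^2/(8t)}$ that you extract from the WKB ansatz is nothing but the inverse of the explicit gauge factor $e^{\mp i s/4}$ evaluated at $s = a^2/(2t)$. Your variation-of-parameters scheme is not wrong in outline, but it duplicates by hand work that the gauge change performs in one line, and it leaves you with the bootstrap and the sharp remainder still to be closed, whereas the energy method dispatches both at once.
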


\medskip

\begin{remark}
From the asymptotic form of the above self-similar waves, for any $a\ne0$, i.e. away from the cross $|x|=|y|$, as $t$ approaches the singular value $0$ (here $t$ may be replaced by $\pm (t-t_0)$), the complex values of the solutions oscillate rapidly in both the angle and the magnitude. In particular, their magnitudes oscillate with an envelope proportional to $\frac 1a$. At the cross $|x|=|y|$, the magnitude of the solutions blows up like $\frac 1{\sqrt{t}}$. Therefore they strongly display a kind of profile of the X shape.
\end{remark}

\medskip

To prove these theorems, we first prove existence of solutions near $a=0$ and next we study their properties as $a \to +\infty$. Equations \eqref{standing}(F)(D) and \eqref{SSstanding}(F)(D) have been extensively studied. To establish local existence of solutions near $a=0$,  well known methods include for example, power series or variational methods. Here we present (in the Appendix) another method based on invariant manifold theory to study solutions near a regular singular point of a class of ODE system of which \eqref{standing}(F)(D) and \eqref{SSstanding}(F)(D) are special cases. The advantage of this method is that it does not require either much smoothness as in the power series approach or variational structures. More specifically, for any $v_0$, let $x_1=v-v_0$ and $x_2 = av_a$ in these four ODEs. Each of them turns into a form for which we may apply Theorem \ref{singularODE}. For example, for \eqref{standing}(F), in the notation of Theorem \ref{singularODE} we have that $F(a, x, v_0) = (x_2, -\mu a^2 (x_1 + v_0) - a^2 (x_1 + v_0)^3)^T$ with the parameter $v_0$ and  $F(0, 0, v_0) =0$ for all $v_0$. Moreover $F_x (0, 0, v_0)= \begin{pmatrix} 0 &1 \\ 0 &0\end{pmatrix}$ and thus $E_+ =\{0\}$. Therefore, for any $v_0$, there exists a unique solution $v(a)$  for small $a\ge 0$ so that
$v$ and $a v_a$ are  H\"older continuous in both $a$ and $v(0)= v_0$. Moreover, the statement (4) in Theorem \ref{singularODE} implies $x_2 = a v_a = O(a^2)$ which implies $v_a(0)=0$. The other equations are handled similarly.
\medskip

\begin{remark} \label{equivariant}
The same Theorem \ref{singularODE} can also be applied to study equivariant standing waves or equivariant self-similar solutions where the equivariance means $u(t, a, \alpha) = e^{ic \alpha} e^{-i \mu t} v(a)$, $c, \mu \in \R$. In this case, for equivariant standing waves, \eqref{system} implies
\begin{align*} \begin{cases}
\quad  \mu v + v_{aa} +  \frac{1}{a} v_a + \frac {c^2}{a^2} v +  |v|^2 v =0 \quad  \qquad \text{ for } \, |x| > |y| \quad \qquad  \rm{(F)} \\
\quad \mu v - v_{aa} -  \frac{1}{a} v_a  - \frac {c^2}{a^2} v +  |v|^2 v =0 \quad  \qquad \text{ for } \, |x| < |y| \quad \qquad  \rm{(D)}
\end{cases}\end{align*}
For $c\ne 0$, Theorem \ref{singularODE} immediately implies that the only continuous solution on $[0, a_0)$, $a_0>0$, of either of the above has to be trivial. Therefore, there are no nontrivial continuous equivariant standing waves. The same argument applies to the self-similar case as well.
\end{remark}

To finish the proof of the theorems, we will study the asymptotic properties of these systems as $a \to \infty$ in the next two sections.

\bigskip

\section{Radial Standing Wave Solutions}

Note that by considering $a_1 = |\mu|^{\frac 12} a$ and $v = |\mu|^{\frac 12} v_1$, we only need to consider $\mu = \pm1, 0$ for \eqref{standing}(F)(D). To study the asymptotic properties of the solutions as $a \to +\infty$ we proceed according to the sign of $\mu$.

\subsection*{ Case $\mu > 0$}

\medskip

First we show that all solutions $v_f$ to \eqref{standing} (F)  go to zero  like $\frac{1}{\sqrt{a}} \cos( \gamma  a + \sigma \log a + \beta)$
as $a \to \infty$.
Indeed, for nontrivial $v$ satisfying \quad $v_{aa} + \frac{1}{a} v_a  + \mu v + |v|^2 v =0$ and $a >0$ define $v =: \frac{1}{\sqrt{a}} w$. Then $w$ satisfies:

\begin{equation} \label{w-equation1} (\frac{1}{4a^2} + \mu) \frac{1}{\sqrt{a}} w +  \frac{1}{\sqrt{a}} w_{aa} +  \frac{1}{a}  \frac{1}{\sqrt{a}} |w|^2 w =0.
\end{equation}

We want to
first show that $w$ is bounded.  \,   Multiply \eqref{w-equation1} by $\sqrt{a}$ to get

\begin{equation}  \label{w-equation2} w_{aa} + (\frac{1}{4a^2} + \mu)  w +  \frac{1}{a} |w|^2 w =0.  \end{equation} Multiplying \eqref{w-equation2} by $w_a$ and rewriting the resulting expression
using `perfect derivatives' we get

\begin{equation}\label{w-equation3} \partial_a \biggl[   \frac{1}{2} (w_a^2) +    \frac{1}{8 a^2}  w^2  + \frac{\mu}{2} w^2 +  \frac{1}{4a} w^4 \biggr]  \, =\, -\frac{1}{4 a^3} w^2 - \frac{1}{4a^2} w^4. \end{equation}
 \bigskip

Define  \begin{equation} \label{energy1} E(a):=  \frac{1}{2} (w_a^2) +    \frac{1}{8 a^2}  w^2  + \frac{\mu}{2} w^2 +  \frac{1}{4a} w^4 \geq 0. \end{equation} Then we have

\begin{equation}\label{energy2} \partial_a E(a)  \, =\, -\frac{1}{4 a^3} w^2 - \frac{1}{4a^2} w^4 \leq 0 \end{equation}
Hence $E(a)$ is decreasing, which implies $w$ and $w_a$ are bounded and
$E_0:=\lim_{a\to \infty} E(a)$ exists. The boundedness of $w$ and \eqref{energy2} again imply
\[
0 \ge \partial_a E \ge - \frac C{a^2}  E
\]
for some $C>0$ independent of $a$. Therefore $E_0>0$ and $E(a) - E_0 = O(\frac 1a)$.  Moreover,  let us
change the variables
\[
w= r \cos \theta, \quad w_a = r \sin \theta.
\]
Recall $\mu =1$, we have $r^2 - 2 E_0 = O(\frac 1a)$ and thus
\begin{equation} \label{r}
r = \sqrt{2E_0} + O(\frac 1a)
\end{equation}
as $a \to \infty$. One may compute
\[\begin{split}
r_a =& -\frac 1{4a^2} r \cos \theta \sin \theta - \frac 1a r^3 \cos^3 \theta \sin \theta\\
\theta_a =& -1 - \frac 1{4a^2} \cos^2 \theta - \frac 1a r^2 \cos^4 \theta\\
=& -1 - \frac {3E_0}{8a} + \frac {E_0}a (\frac 38 - \cos^4 \theta) + \frac {E_0 - r^2}a \cos^4 \theta.
\end{split} \]
Integrating the $\theta$ equation by parts and using $\theta_a = -1+ O(\frac 1a)$, one may show,
\begin{equation} \label{theta}
\theta = \theta_0 + a + \frac {3 E_0}8 \log a + O(\frac 1a).
\end{equation}

\bigskip

We next consider solutions $v_d$ to \eqref{standing} (D).   Multiplying (D) by $av$ and integrating on $[0, a]$, we obtain
\[
(v^2)_a = \frac 2a \int_0^a  {\tilde a} \,  ((v_a)^2 + v^2 + v^4) \, \, d{\tilde a}
\]
which implies $v^2 \ge O(\log a)$ as $a \to \infty$ unless $v\equiv 0$. Therefore there are no nontrivial bounded solutions $v_d$ to \eqref{standing}(D).

All in all we conclude then that  for the case $\mu >0$ there are no  bounded solutions other than $(v_f, 0, -v_f, 0)$, which can not be in $C^0$ unless trivial.

\bigskip

\subsection*{ Case $\mu =0$}

For \eqref{standing} (F), let $v= : a^{-\frac 13} w$ one may compute
\[
w_{aa} + \frac 1{3a} w_a + \frac 1{9a^2} w + a^{-\frac 23} w^3=0.
\]
Multiplying it by $a^{\frac 23} w_a$, we have
\[
\p_a (\frac 12 a^{\frac 23} (w_a)^2 + \frac 14 w^4 + \frac 1{18} a^{-\frac 43} w^2) = -\frac 2{27} a^{-\frac 73} w^2 \le 0.
\]
Therefore $w=O(1)$ and $w_a = O(a^{-\frac 13})$ as $a \to \infty$, which implies
\[
v_a = O(a^{-\frac 23}), \quad v= O(a^{-\frac 13}), \text{ as } a \to \infty.
\]
In fact, for different values of $v(0)$, the solution $v(a)$ only differs
by a scaling change.

Finally, the same argument as in the case of $\mu>0$ shows that the only bounded solution $v_d$ to \eqref{standing}(D) is the trivial one.

We thus also conclude in the case $\mu=0$ that there are no bounded solutions other than $(v_f, 0, -v_f, 0)$, which can not be in $C^0$ unless is trivial.

\bigskip

\subsection*{{Case $\mu <0$ ($\mu =-1$)}}  From the uniqueness given in (4) of Theorem \ref{singularODE}
of solutions which are H\"older continuous at $0$, clearly $v_d, v_f \equiv 0,\,\pm1$ if $v_d(0), v_f(0) =0, \,\pm1$.

We first consider solutions $v_d$ to \eqref{standing} (D) and claim that if  $|v_d (0)|>1$, then $|v_d(a)| >1$ and $|v_d(a)| \to \infty$ as $a \to \infty$. Indeed, without any loss of generality assume $v_d(0)>1$ and $v_d(a)= 1+z(a)$. One may compute
\begin{equation} \label{zd}
z_{aa} + \frac 1a z_a - 2z - 3z^2 - z^3 =0.
\end{equation}
Multiplying it by $z_a$, we have
\[
\p_a (\frac 12 (z_a)^2 - (z^2 + z^3+ \frac 14 z^4)) = -\frac 1a (z_a)^2,
\]
which implies
\[
0>- (z(0)^2 +z(0)^3+ \frac 14 z(0)^4) \ge (\frac 12 (z_a(a))^2 - (z(a)^2 + z(a)^3+ \frac 14 z(a)^4)).
\]
Therefore, $z(a) \neq 0$,  for all $a>0$ which yields $z(a)>0$. Multiplying \eqref{zd} by $-a z$ implies
\[
\p_a (z^2) = \frac 2a \int_0^a {\tilde a} \, ((z_a)^2 + 2z^2 + 3z^3+ z^4) \, \, d{\tilde a}
\]
which implies $z^2 \ge c\log a$. \\

Thus we assume $v_d(0) \in (-1, 1)$ and claim  $v_d \to 0$ like $\frac{1}{\sqrt{a}} \cos( \gamma  a + \sigma \log a + \beta)$  as $a \to +\infty$.  Indeed, multiplying \eqref{standing} (D) by $v_a$ we obtain
\[
\p_a (\frac 12 (v_a)^2 + \frac 12 v^2 - \frac 14 v^4) = - \frac 1a (v_a)^2 \le 0.
\]
Since $\frac 12 v^2 - \frac 14 v^4$ is strictly increasing in $v^2$ for $v^2 \in [0, 1]$, we have
\begin{equation} \label{1}
v(0) \in (-1, 1) \implies v(a)^2 \le v(0)^2 <1, \quad a>0.
\end{equation}
We will follow closely the procedure which we used to handle solutions to \eqref{standing} (F) in the case of $\mu >0$. Define $v = : \frac{1}{\sqrt{a}} w$. Then $w$ satisfies:
\begin{equation}\label{weq1}  w_{aa} + (\frac{1}{4a^2} + 1)  w - \frac{1}{a} w^3 =0 \end{equation}
Multiplying \eqref{weq1} by $w_a$ and rewriting the resulting expression as `perfect derivatives' we get
\begin{equation}\label{weq2}  \partial_a \biggl[   \frac{1}{2} (w_a^2) +    \frac{1}{8 a^2}  w^2  + \frac1{2} w^2 -  \frac{1}{4a} w^4 \biggr]  \, =\, -\frac{1}{4 a^3} w^2 + \frac{1}{4a^2} w^4. \end{equation}
Define now
\begin{equation}\label{energyb1} E(a):=  \frac{1}{2} (w_a^2) +    \frac{1}{8 a^2}  w^2  + \frac1{2} w^2 -  \frac{1}{4a} w^4  \ge \frac{1}{2} (w_a^2) + (\frac 12 - \frac {|v(0)|^2}4) w^2, \end{equation}
where we used $|a^{-\frac 12} w| = |v| \le |v(0)|$. From \eqref{weq1} we have that
\begin{equation}\label{energyb2}    \partial_a E(a)  \, =\, -\frac{1}{4 a^3} w^2 + \frac{1}{4 a^2} w^4 \le  (\frac {w^2}{4a^2}) w^2.  \end{equation}
Using $|a^{-\frac 12} w| = |v| \le |v(0)|$ and \eqref{1} again, we obtain from \eqref{energyb2}
\begin{equation} \label{diffineq}
\partial_a E(a) \le \frac ca E, \quad c = \frac {|v(0)|^2}{2 - |v(0)|^2} <1. \end{equation}

The differential inequality \eqref{diffineq} implies $E \le O(a^{c})$ and thus $w\le O(a^{\frac c2})$ as $a\to \infty$.  Now from \eqref{energyb1} and \eqref{1} we obtain that
\begin{equation}\label{energyb3}
E(a) \geq \kappa_0 w^2 \quad  \mbox{ where } \quad  \frac{1}{2}>  \kappa_0:= \frac 12 - \frac {|v(0)|^2}4 > 0.
\end{equation}
Using \eqref{energyb2} and \eqref{energyb3} we conclude that
\begin{equation}\label{energyb4}  \partial_a E(a) \leq O(a^{c-2}) w^2  \leq O(a^{c-2}) E.  \end{equation} Hence $E=O(1)$, which gives $w =O(1)$ and  $w_a = O(1)$ from \eqref{energyb1}. The oscillatory asymptotic form of $v$ then follows from the exactly same argument used in the case $\mu >0$.

\bigskip

Next, we consider solutions $v_f$ to \eqref{standing} (F).
We will show that for $v_f(0) \in (0, 2)$, solutions $v_f$ to \eqref{standing}(F) behave like $1+ \frac{1}{\sqrt{a}} \cos( \gamma  a + \sigma \log a + \beta)$  as $a \to +\infty$. By the odd symmetry of the equation, the case when
$v_f(0) \in (-2, 0)$ is the same. First, multiplying \eqref{standing} (F) by $v_a$, we have
\[
\p_a (\frac 12 (v_a)^2 + \frac 14 (v^2 -1)^2) = -\frac 1a (v_a)^2
\]
which, along with $v_a(0)=0$ implies that
\begin{equation} \label{v0}
|v(a)^2 -1| \le |v(0)^2 -1|.
\end{equation}
In particular, \eqref{v0} yields $v(a) >0$. Again, let $v =:1+ a^{-\frac 12} w$; then $w$ satistfies
\begin{equation} \label{wf}
w_{aa} + (2+ \frac 1{4a^2}) w + 3a^{-\frac 12} w^2 + \frac 1a w^3=0.
\end{equation}
Multiplying \eqref{wf} by $w_a$, we have
\begin{equation} \label{paE}
\p_a E = - \frac 14 a^{-3} w^2 - \frac 12 a^{-\frac 32} w^3 - \frac 14 a^{-2} w^4 = - \frac 14 a^{-1} w^2 (a^{-2} + 2a^{-\frac 12} w + a^{-1} w^2)
\end{equation}
where
\begin{equation} \label{E}
E = \frac 12 (w_a)^2 + (1+\frac 18 a^{-2}) w^2 + a^{-\frac 12} w^3 + \frac 14 a^{-1} w^4 \ge \frac 12 (w_a)^2 + \frac 14 (1 + v)^2 w^2 > \frac 12 (w_a)^2 + \frac 14 w^2.
\end{equation}
In the last two inequalities, we used the fact $v(a) >0$. For the same reason, we can estimate the right hand side of of \eqref{paE} as
\[
\p_a E \le \frac {w^2}{4a} (1- v^2) \le \frac ca E, \quad c= |v(0)^2 -1| <1.
\]
It implies $E=O(a^c)$ and thus $w=O(a^{\frac c2})$ as $a \to \infty$. Substituting this into \eqref{paE} we obtain
\[
\p_a E = O(a^{-\frac {3-c}2} )E
\]
and thus $E=O(1)$ and in fact $E \to E_0>0$ as $ a\to \infty$ unless $w \equiv 0$. The oscillatory asymptotic form of $w$ can be obtained much as in the previous cases. \\

In fact, by a more careful analysis, one may show the behavior of $v_f$ at $a \to \infty$ is always one of
oscillatory convergence to $\pm 1$ except maybe for a sequence of data $v_f(0) = v_j$, $|v_j| > 2$ which make
$v_f$ converge to $0$ exponentially. Since we seek bounded continuous solutions to \eqref{standing} and we have already showed that solutions $v_d$ to \eqref{standing}(D) are unbounded if  $|v_d(0)|>1$, the above result is
sufficient for us.

From the above we conclude then that for the case  $\mu <0$ there exist smooth $v_d$ and $v_f$ so that $v_d(0)= v_f(0) \in (-1, 1)$, $\p_a v_d(0)=0 = \p_a v_f(0)$, $v_f(\infty) = \pm 1$ and $v_d (\infty) =0$ where the convergence is like $a^{-\frac 12} \cos (\gamma a + \beta \log a + \sigma + O(\frac 1a))$.

\bigskip

\section{Radial self-similar waves}

To analyze the system \eqref{SSstanding} for self-similar solutions, we first change the independent variable to $s= \frac {a^2}2$ and then change the unknown to $w=s^{\frac 12} e^{\mp \frac i4 s} v$. Consequently, \eqref{SSstanding} becomes
\begin{align} \begin{cases} \label{SSstanding1}
\quad w_{ss} + (\frac 1{16} +\frac 1{4s^2} + \frac \mu{2s}) w + \frac 1{2s^2} |w|^2w =0 \qquad \text{ for } \, |x| > |y|  \qquad  \rm{(F)}\\
\quad w_{ss} + (\frac 1{16} +\frac 1{4s^2} - \frac \mu{2s}) w - \frac 1{2s^2} |w|^2w =0  \qquad \text{ for } \, |x| < |y|  \qquad  \rm{(D)} \end{cases}\end{align}
Without loss of generality (see Remark \ref{va=0} and \ref{real}), we may assume $w$ is real-valued. It is clear that, as long as we obtain the boundedness of the solutions to the ODEs, their oscillatory asymptotic forms follow from the exactly same change of variables as in the previous section $w=r\cos \theta$ and $w_s = \frac 14 r \sin \theta$ and similar arguments. So we focus only on the boundedness.

Multiplying \eqref{SSstanding1} by $w_s$ and letting
\begin{equation}\label{SSenergy}
E(s) = \frac 12 w_s^2 + (\frac 1{32} + \frac 1{8s^2} \pm \frac \mu{4s}) w^2 \pm \frac 1{8s^2} w^4
\end{equation}
we obtain
\begin{equation}\label{SSderivative}
\p_s E(s) = -(\frac 1{4s^3} \pm \frac \mu{4s^2}) w^2 \mp \frac 1{4s^3} w^4,
\end{equation}
which we would like to control in terms of \eqref{SSenergy}.
\subsection*{Focusing equation (F)}

In this case, for  $s > s_0 := 16 (|\mu|+1)$ we have $E>0$ and
\[
\p_s E(s) \le \frac {|\mu|}{4s^2} w^2 \le \frac {16|\mu|}{s^2} E(s).
\]
Therefore $E(s) \to E_0 >0$ as $s\to \infty$ which implies that $w$ and $w_s$ are bounded as desired.

\subsection*{Defocusing equation (D)}

In this case, we follow a standard bootstrap argument.  For those solution $w(s)$ such that $|w(s_0)|\le 1$, let
\[
s_1 := \sup \{s\ge s_0 \mid |w(s')| \le 1 \, \forall s' \in [s_0, s]\}.
\]
For $s \in [s_0, s_1]$, we have from \eqref{SSenergy}

\begin{equation}\label{energybound}
E(s) \ge \frac 12 w_s^2 + \frac 1{64} w^2
\end{equation}
and from \eqref{SSderivative} and \eqref{energybound} we obtain that
  \[\p_s E(s) \le \frac {|\mu|}{4s^2} w^2 + \frac 1{4s^3} w^4 \le \frac {|\mu| +1}{4s^2} w^2 \le \frac {16(|\mu|+1)}{s^2} E(s).
\]
Therefore, if $|w(s_0)|$ and $|w_s(s_0)|$ are sufficiently small -which by the continuity in Theorem \ref{singularODE} is guaranteed by taking $w(0)$ sufficiently small-  we have that $s_1 =\infty$. It follows $w(s)$ and $w_s(s)$ are bounded.

\bigskip

\section{Conclusions \& Future Directions}

In conclusion, in the present work, we have considered the two-dimensional
hyperbolic NLS equation. We have illustrated the relevance of the use
of the so-called hyperbolic variables in the context of this PDE model
(relevant to optical media combining normal dispersion --typically in the
time variable-- to anomalous one --typically in the continuous spatial
variables). We have
discussed the context of weak solutions within the model and how
relevant compatibility conditions naturally arise along the characteristic
lines of the hyperbolic operator. We have subsequently elucidated the
prototypical weak standing wave solutions that the model supports
establishing that they only exist for negative values of the propagation
constant $\mu$ and characterizing their weak modulated power law decay
on the basis of energy-type methods (in the radial hyperbolic variable).
Also, similar techniques but in appropriately rescaled variables
have been used to discuss the existence of radial, self-similar
wave solutions to the PDE for all $\mu \in \R$.

It would be particularly interesting to revisit the numerical computations
of earlier works \cite{CoTri,trillo1,trillo2,DROU,EFR} in the context
of weak solutions and to examine whether for appropriately crafted initial
conditions the relevant direct numerical
computations support the decay rates analytically obtained herein.
Preliminary computations with radial initial data such as $u(x,y,0)=
\tanh(a)$ for $|x|>|y|$ (and zero otherwise) seem to support the
radial evolution of the solution profile and the development of oscillations
around the unit asymptotic state, as suggested by our Main Theorem (see Fig.
\ref{fig1}).   In particular, this initial profile provides a
vanishing solution in the (D) region, as well as a radial solution
asymptoting to unity in the (F) region. The observation that can be
made based on these preliminary results is that in the latter region
the data indeed remains radial in nature (since the hyperbolae forming
in the middle panel are equi-$a$ lines) yet it also develops an
oscillatory dynamics associated with the decay to the stationary
state, which is reminiscent of the oscillatory convergence suggested
by our Main Theorem. It should also be noted that the numerical
method used here was a 4th order Runge-Kutta for marching the system
in time, combined with a centered-difference in space scheme in a
sufficiently fine spatial grid.
However, detailed numerical investigations are relevant
to establish the relevant decay; these will be deferred to
a future publication.
From the point of view of analytical considerations, it would be
relevant to generalize the results obtained herein to the more
experimentally tractable directions of 3-dimensional media
(with two focusing and one defocusing direction) as in
\cite{trillo1,trillo2,CoTri,CoDiTri}. In that setting, it would
be relevant to connect generalizations of the present results to the
rapidly evolving literature on the nonlinear X-waves of the above works.
Another highly promising direction is that
of the quasi-discrete waveguide array setting of equation  (\ref{eqn1}) in
$2+1$-dimensions or of \cite{heinrich} in $3+1$-dimensions.
 Notice, however, that even the setting of equation (3HNLS)
considered herein may be of relevance to physical applications
of light propagation within planar waveguides in glass membrane
fibers, where already some interesting phenomena
such as linear and nonlinear guidance with ultralow optical attenuation
have been observed; see e.g.
\cite{kolesik}.

\begin{figure}
\begin{center}
    \includegraphics[width=\middlefig]{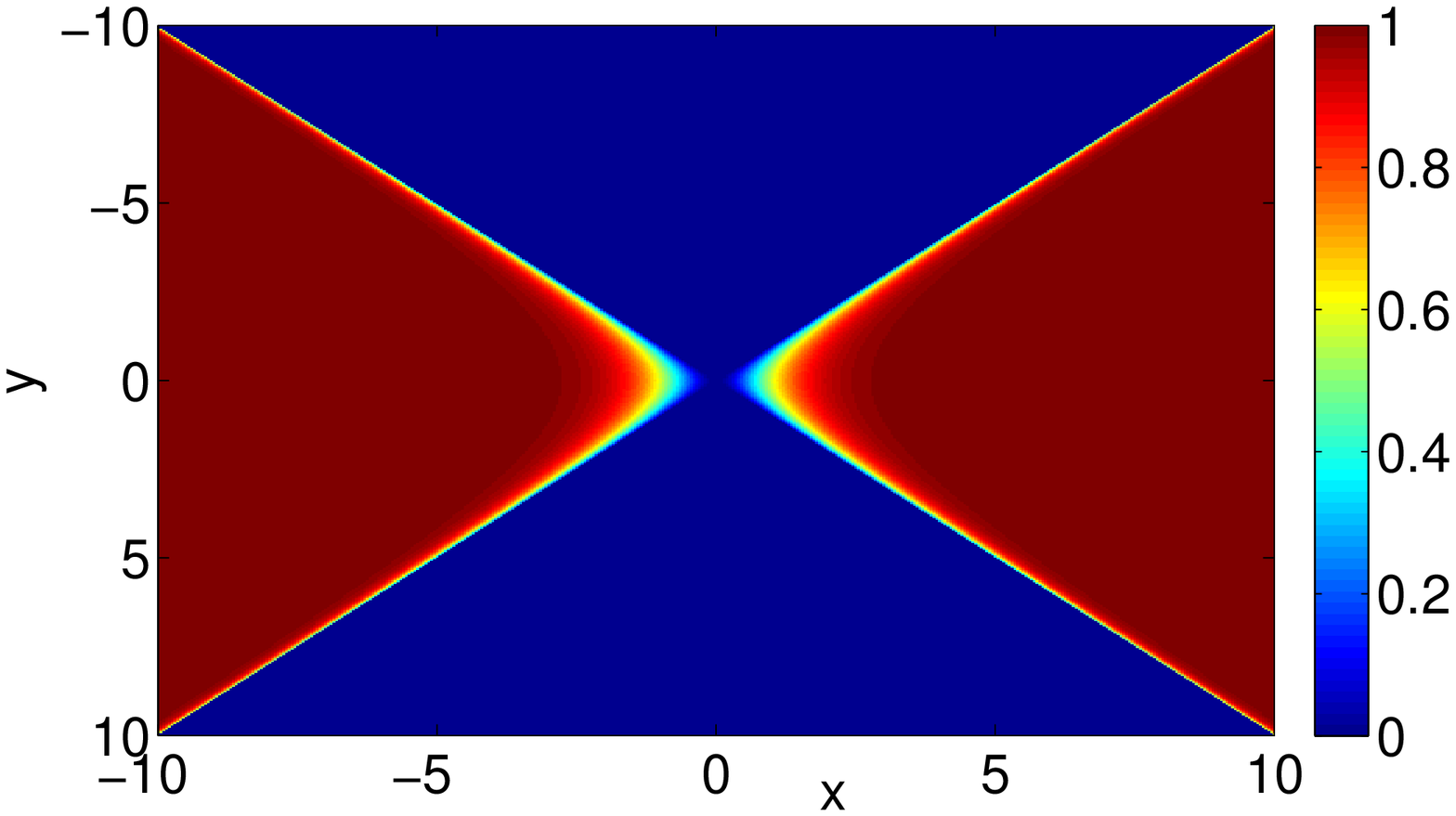}

    \includegraphics[width=\middlefig]{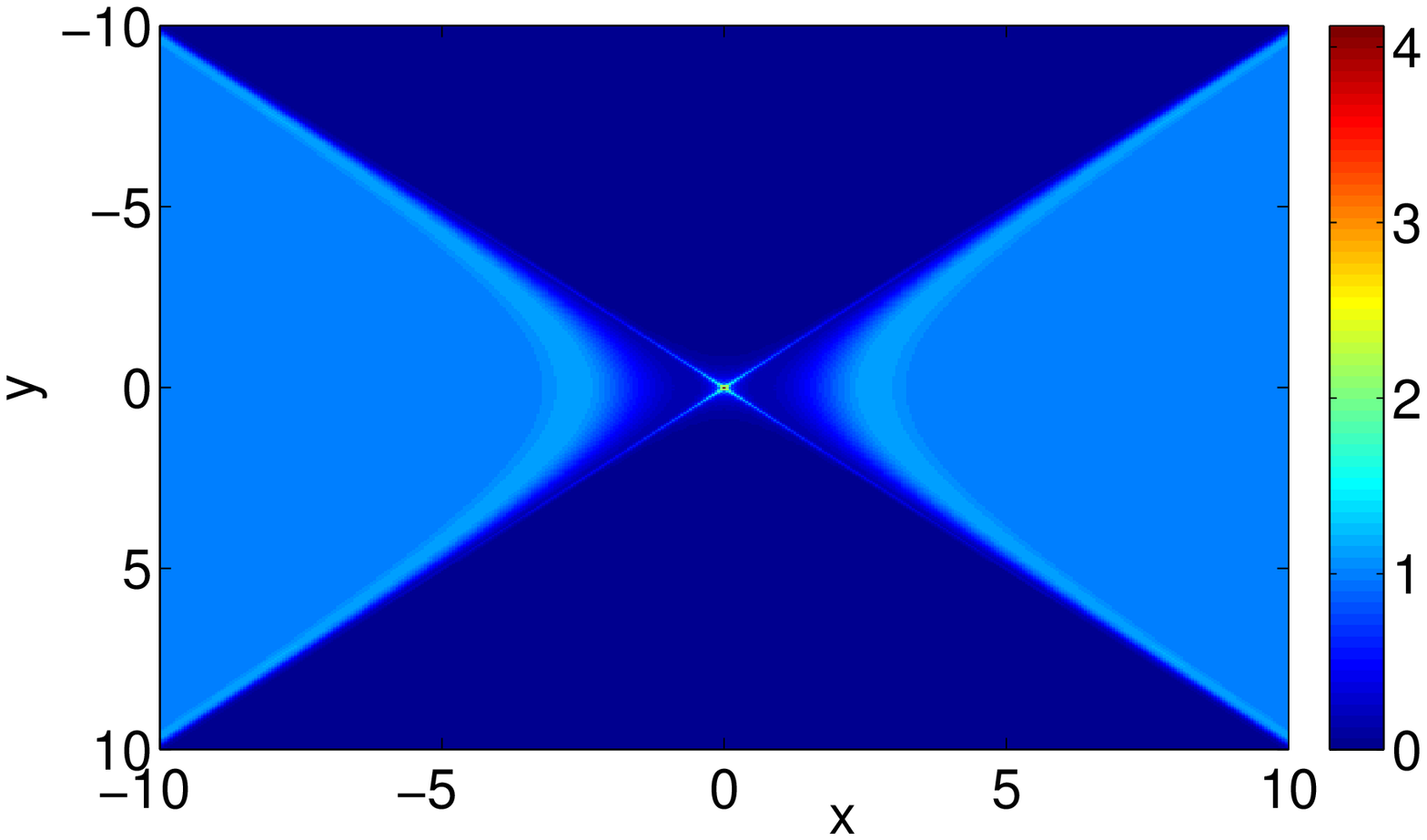}

     \includegraphics[width=\middlefig]{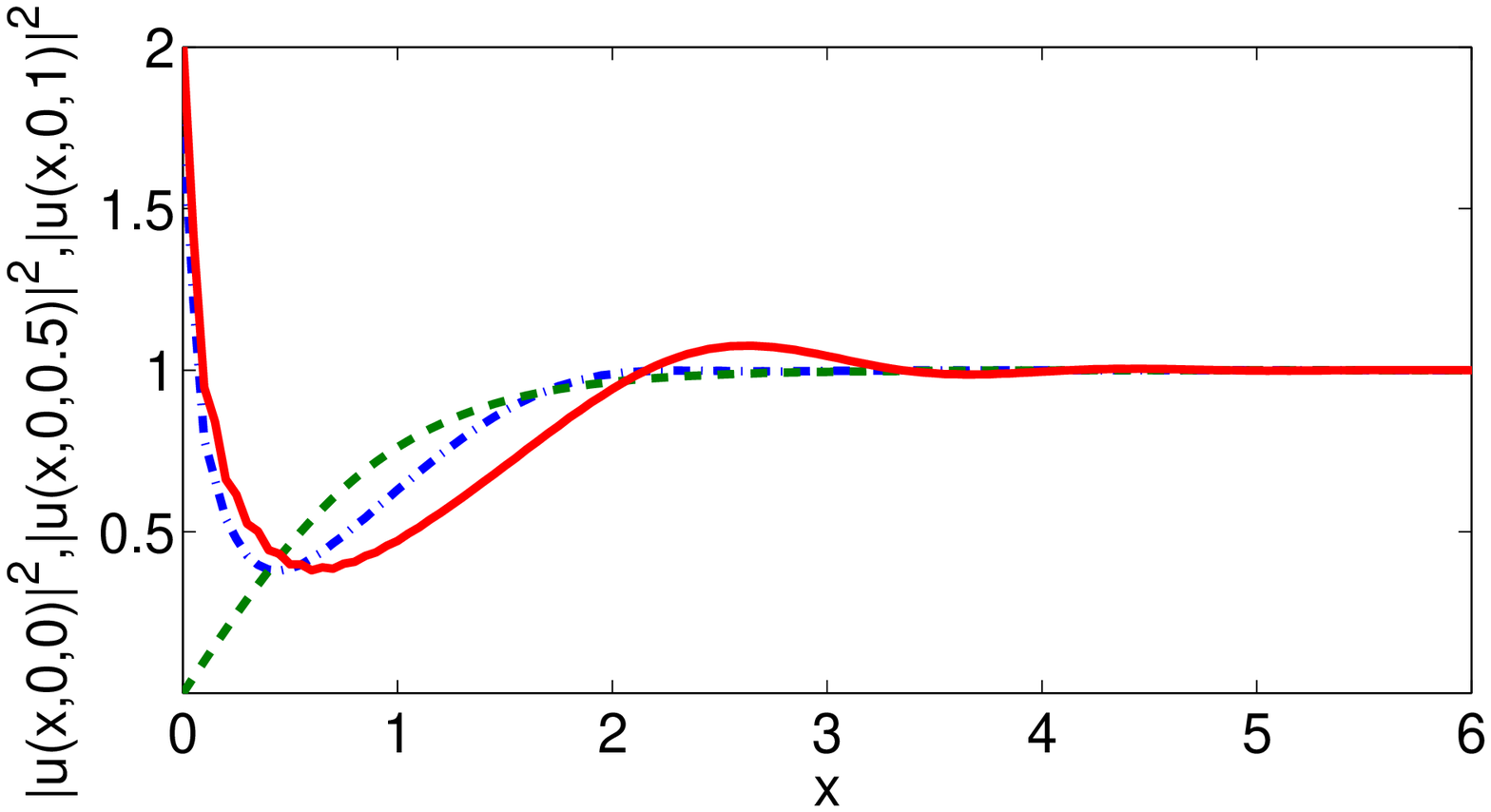}
\caption{Evolution of Eq. (\ref{3HNLS}) for $\gamma=1$ (a factor of
$1/2$ is also used in front of the $\square_{xy}$ as is often done in the
physics literature). The top panel shows the initial condition and the
middle panel shows the result of the evolution at $t=1$. Both panels
show contour plots of $|u|^2$. Notice the hyperbolic
equi-$a$ contours forming
in the middle panel.
The bottom panel
shows a radial cross-section (at $y=0$) of the dynamics, illustrating
the oscillations around the asymptotic state that spontaneously
develop in the dynamics (as suggested by our Main Theorem). The dashed
line shows the initial density profile, the dash-dotted one shows its
evolution at $t=0.5$ and the solid one its evolution at $t=1$.}
\label{fig1}
\end{center}
\end{figure}

\bigskip

\section{Appendix: An ODE blow-up theorem near a regular singular point}

\medskip


Consider
\begin{equation} \label{sODEs}
x' = \frac 1t F(t, x), \quad t>0, \; x \in \R^n, \quad F \in C^k(\R^{n+1}, \R^n).
\end{equation}
Here $t=0$ is a so called regular singular point of the system and we are interested in solutions continuous or smooth at $t=0$. A subcategory of this system is a scalar valued equation
\[
y^{(n)} = \frac 1{t^n} H(t, y, ty', \ldots, t^{n-1} y^{(n-1)})
\]
which include \eqref{standing}(F)(D) and \eqref{SSstanding}(F)(D) as special cases. In fact, by setting $x_j = t^{j-1} (y^{(j-1)} - c_j)$ for $j=1, \ldots, n$ and arbitrary constants $c_1, \ldots, c_n$, the above equation turns into the form of \eqref{sODEs}.

\medskip

Let $B(X, r)$ denote the ball centered at $0$ and with the radius $r$ in a Banach space $X$.

\begin{theorem} \label{singularODE}
Assume $k\ge 1$.
\begin{enumerate}
\item There exists a solution $x(t)$ to \eqref{sODEs} which is continuous on $t \in [0, \varepsilon]$ for some $\varepsilon>0$ and $x(0)= x_0$ if and only if $F(0, x_0)=0$;
\item Suppose $F(0, x_0)=0$. Let $E_+$ be the generalized eigenspace of $F_x(0, x_0)$ corresponding to all eigenvalues whose real parts are greater than $0$. Then there exist $\varepsilon >0$ and a $C^k$ mapping $h :B(E_+, \varepsilon) \to  B((E_+)^\perp, \varepsilon)$ so that: \,  $x(t)$ is a solution of \eqref{sODEs} on $(0, \varepsilon]$ with $x(\varepsilon) \in B(E_+, \varepsilon) + B((E_+)^\perp, \varepsilon)$ and H\"older continuous for $t \in [0, \varepsilon]$ and $x(0) = x_0$ if and only if $x(\varepsilon) \in graph(h)$.
\item The above solutions are of order $O(t^\beta)$ where $\beta \in (0, \beta_0)$ and
\[
\beta_0= \min\{1, \inf\{\rm{Re} \, \lambda  \mid \lambda \text{ is an eigenvalue of } F_x(0, x_0) \text{ and } \rm{Re}\, \lambda>0\}\}.
\]
\item If $k \ge 2$ and $E_+ =\{0\}$, there exists only a unique solution $x(t)$ on $(0, \varepsilon]$ so that it is H\"older continuous in $t$ on $[0, \varepsilon]$ and $x(0)=x_0$. Moreover, if $F_t(0, x_0)=0$, $x(t)=O(t^2)$ as $t \to 0$.
\item If $F=F(t, x, \alpha)$ with a external parameter $\alpha$, $F \in C^k$, and $F(0, x_0, \cdot) \equiv0$, the mapping $h$ in item (2) is also $C^{k-1}$ in $\alpha$.
\end{enumerate}
\end{theorem}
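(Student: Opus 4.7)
The strategy is to conjugate away the singularity by the logarithmic time change $s=\log t$. Under this substitution the equation transforms into the \emph{regular} autonomous system
\begin{equation*}
\frac{dt}{ds}=t, \qquad \frac{dx}{ds}=F(t,x)
\end{equation*}
on $\mathbb R\times\mathbb R^n$, with $t\to 0^+$ corresponding to $s\to-\infty$. A continuous solution $x(t)$ of the singular equation with $x(0)=x_0$ corresponds precisely to a trajectory of this augmented system converging to the equilibrium $(0,x_0)$ as $s\to-\infty$; the convergence forces $F(0,x_0)=0$, establishing the necessity in (1).

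For (2), linearize at $(0,x_0)$: the Jacobian is block triangular with eigenvalue $1$ on the $t$-factor and spectrum $\sigma(F_x(0,x_0))$ on the $x$-factor, so its unstable generalized eigenspace is spanned by the $t$-axis together with $E_+$ and has dimension $\dim E_+ + 1$. A standard Lyapunov--Perron fixed-point argument in an exponentially weighted sup norm produces a $C^k$ local unstable manifold of this dimension, expressible as the graph of a $C^k$ map from the unstable eigenspace to its complement. Slicing the graph at $\{t=\varepsilon\}$ yields the map $h:B(E_+,\varepsilon)\to B(E_+^\perp,\varepsilon)$ of item (2). The bridge to the H\"older hypothesis is that $|x(t)-x_0|\le Ct^\gamma$ is equivalent to $|x(s)-x_0|\le Ce^{\gamma s}$, i.e.\ strictly exponential decay as $s\to-\infty$; this excludes any trajectory with a center-eigenspace component of $F_x(0,x_0)$ (which decays only sub-exponentially in $s$), so the H\"older-continuous backward orbits are precisely those lying on the unstable manifold. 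The sufficiency in (1) is then immediate by taking the trajectory corresponding to the origin of $E_+$.

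For (3), trajectories on the unstable manifold approach the equilibrium at a rate governed by the smallest positive real part of an eigenvalue of the total Jacobian, giving $|x(t)-x_0|=O(t^\beta)$ for any $\beta<\beta_0$. For (4), $E_+=\{0\}$ forces the unstable manifold to be the one-dimensional curve along the $t$-axis, yielding uniqueness of the H\"older solution. Under the extra hypothesis $F_t(0,x_0)=0$, Taylor expansion gives $F(t,x)=F_x(0,x_0)(x-x_0)+O(t^2+|x-x_0|^2+t|x-x_0|)$, so that $tx'=O(t^2)$ once the preliminary bound $x-x_0=O(t^\beta)$ with $\beta$ close to $1$ is inserted; a bootstrap then upgrades this to $x-x_0=O(t^2)$. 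For (5), appending the trivial equation $d\alpha/ds=0$ to the autonomous system reduces the parametrized case to the previous setting, and the parametrized invariant manifold theorem delivers $h$ of class $C^{k-1}$ in $\alpha$, with the customary loss of one derivative when the parameter enters the nonlinearity.

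The principal technical point I expect to be delicate is the H\"older-versus-center-direction bookkeeping in (2): one must carefully rule out contributions from the larger center-unstable manifold and verify that H\"older continuity in $t$ at $t=0$ is exactly the dichotomy selecting the $(\dim E_+ + 1)$-dimensional unstable manifold. This requires a comparison of polynomial rates in $t$ against the logarithmic-in-$s$ factors that Jordan blocks on the imaginary axis of $F_x(0,x_0)$ can produce, and a verification that these factors genuinely break H\"older continuity rather than merely degrade its exponent.
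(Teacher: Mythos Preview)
Your approach is essentially the same as the paper's: the logarithmic time change $s=\log t$ to an autonomous system, followed by the standard unstable manifold theorem at the equilibrium $(0,x_0)$, with H\"older continuity in $t$ corresponding exactly to exponential decay in $s$. One small correction: the generalized eigenvector for the eigenvalue $1$ of the augmented Jacobian is $(1,w)^T$ with $w=-(F_x(0,x_0)-I)^{-1}F_t(0,x_0)$ rather than the pure $t$-axis (the Jacobian is lower, not upper, block triangular); this is immaterial for your argument since you obtain the $O(t^2)$ estimate in (4) by bootstrap rather than via the tangency of $W^u$, whereas the paper uses precisely that tangency together with the observation that $w=0$ once $F_t(0,x_0)=0$.
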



\begin{proof}
To blow up the system to analyze solutions near $t=0$, we create an auxiliary independent variable $\tau$ so that $\frac {dt}{d\tau}=t$, use $\dot x$ to denote the differentiation in $\tau$, and rewrite \eqref{sODEs} as an autonomous system
\begin{equation} \label{aut} \begin{cases}
\dot t = t\\
\dot x = F(t, x)
\end{cases} \text{ or } \dot {\tilde x} = \tilde F(\tilde x)\end{equation} where $\tilde x= (t, x)^T$.
A solution of \eqref{sODEs} is continuous at $t=0$ is equivalent to that its corresponding solution of \eqref{aut} converges as $\tau \to -\infty$. On the one hand, it is clear that $x_0 = \lim_{\tau \to -\infty} x(\tau)$ only if $\tilde x_0= (0, x_0)^T$ is a fixed point of \eqref{aut}. On the other hand, suppose $F(0, x_0)=0$, we notice $1$ is an eigenvalue of $D\tilde F(\tilde x_0)$ with a generalized eigenvector ${\bf T}=(1, w)^T$, for some $w \in \R^n$,
and $\tilde E_+ = E_+ \oplus$span$\{{\bf T}\}$ is the generalized eigenspace of $D\tilde F(\tilde x_0)$ corresponding to all eigenvalues whose real parts are greater than $0$. From the standard unstable manifold theorem (c.f for example \cite{CLW, chowlu, CLL}), there exists a $C^k$ manifold $W^u \subset \R^{n+1}$ in a neighborhood of $\tilde x_0$ so that (a) \, $\tilde x_0 \in W^u$; \, (b) \, $T_{\tilde x_0} W^u = \tilde E_+$; \, (c) \, it is locally invariant under the flow of \eqref{aut}; and (d)\, $\tilde x \in W^u$ if and only the solution $\tilde x(\tau)$ of \eqref{aut} with $\tilde x(0) =\tilde x$ stay close to $\tilde x_0$ for all $\tau \in (-\infty, 0]$ and $\tilde x(\tau) \to \tilde x_0$ exponentially as $\tau \to -\infty$. Changing the variable back from $\tau$ to $t$, the exponential rate $O(e^{r\tau})$ as $\tau \to -\infty$ becomes $O(t^r)$. Therefore, for
equation \eqref{sODEs}, solutions on $W^u \backslash \{t=0\}$ exactly correspond to solutions which are H\"older in $t$ and converge to $x_0$ as $t \to 0$. This proves statement (2). Statement (3) is obvious when one notices that $t = c e^\tau$. Statement (4) is a consequence of the tangency of $W^u$ to $\tilde E_+$, which is equal to $\R {\bf T}
=\R (1, 0)^T$ in this case, at $\tilde x_0$. Statement (5) follows directly from the the standard result that unstable manifolds are smooth in the external parameters \cite{CLL}.
\end{proof}



\bigskip

\end{document}